\theoremstyle{plain}
\newtheorem{theorem}{Theorem}[section]
\newtheorem{lemma}[theorem]{Lemma}
\theoremstyle{definition}
\newtheorem{definition}[theorem]{Definition}
\theoremstyle{remark}
\newcommand{\Expect}{\mathsf{E}}
\begin{document}


\title{Modeling and Quickest Detection of a Rapidly Approaching Object}

\author{
	\name{Tim Brucks\textsuperscript{a}, Taposh Banerjee\textsuperscript{b}\thanks{CONTACT Taposh Banerjee. Email: taposh.banerjee@pitt.edu}, and Rahul Mishra\textsuperscript{c}}
	\affil{\textsuperscript{a}University of Texas at San Antonio; \textsuperscript{b}University of Pittsburgh; \textsuperscript{c}U. R. Rao Satellite Center, Bangalore, India.}
}
\maketitle

\begin{abstract}
	The problem of detecting the presence of a signal that can lead to a disaster is studied. A decision-maker collects data sequentially over time. At some point in time, called the change point, the distribution of data changes. This change in distribution could be due to an event or a sudden arrival of an enemy object. If not detected quickly, this change has the potential to cause a major disaster. In space and military applications, the values of the measurements can stochastically grow with time as the enemy object moves closer to the target. 
A new class of stochastic processes, called exploding processes, is introduced to model stochastically growing data. An algorithm is proposed and shown to be asymptotically optimal as the mean time to a false alarm goes to infinity. 
\end{abstract}

\begin{keywords}
Quickest change detection, nonstationary processes, monotone likelihood ratio, stochastic dominance, minimax optimality.
\end{keywords}

\section{Introduction}
In the problem of quickest change detection (QCD), a decision maker collects a sequence of measurements. The values in the sequence are seen as a realization of a stochastic process. It is assumed that the law of this stochastic process initially follows a distribution that is believed to be normal. At some point in time, called the change point, the statistical properties of the measurements or the law of the process changes. The goal of the QCD problem is to detect this change in distribution as quickly as possible while avoiding many false alarms \cite{veer-bane-elsevierbook-2013, poor-hadj-qcd-book-2009, tart-niki-bass-2014}. The most common change point model is the abrupt and persistent change model \cite{mous-astat-1986, shir-opt-stop-book-1978, lai-ieeetit-1998, tart-veer-siamtpa-2005, tartakovsky2017asymptotic, Pergamenchtchikov2018}. In this model,the law of the process abruptly changes at the change point and persists with that new law forever. For example, in Fig.~\ref{fig:abcpmodel} (left), we have plotted the mean values of a sequence of Gaussian random variables. At the change point ($80$ in the figure), the mean abruptly changes from $0$ to $4$ and stays at $4$ forever.  
\begin{figure}
	\centering
	\includegraphics[scale=0.4]{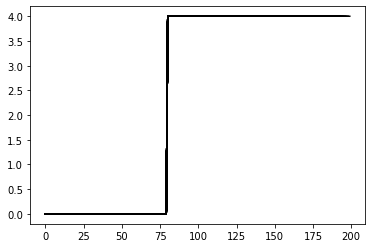} \quad 
		\includegraphics[scale=0.4]{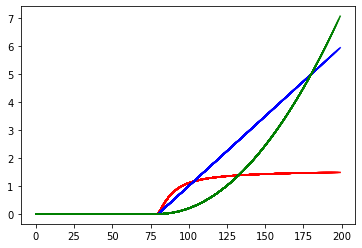}
	\caption{Left figure: Classical abrupt change point model. Figure shows the mean of a sequence of Gaussian random variables. This mean is $0$ before change and is $4$ after change. Right figure: In some space or military applications, the post-change measurements can have an exploding effect (linear, super-linear, or sublinear). The values of measurements can stochastically increase as the enemy object comes closer.}
	\label{fig:abcpmodel}
\end{figure}

In some military and space applications, the post-change law can have an exploding nature. 
\begin{enumerate}
	\item Space scientists are concerned with the detection of debris or other hazardous objects approaching a satellite and destroying it 
	(see Fig.~\ref{fig:satellite}). As the target object rapidly approaches the satellite, the mean of the measurements will have an exploding nature as shown in Fig.~\ref{fig:abcpmodel} (right). In Section~\ref{sec:satellite}, we provide a detailed discussion on this application. 
	\item Another classical example is enemy object detection in military applications. As a missile approaches a target or a torpedo approaches a submarine or ship, the values of the measurements are expected to increase with time. 
\end{enumerate}

\begin{figure}
	\centering
	\includegraphics[scale=0.1]{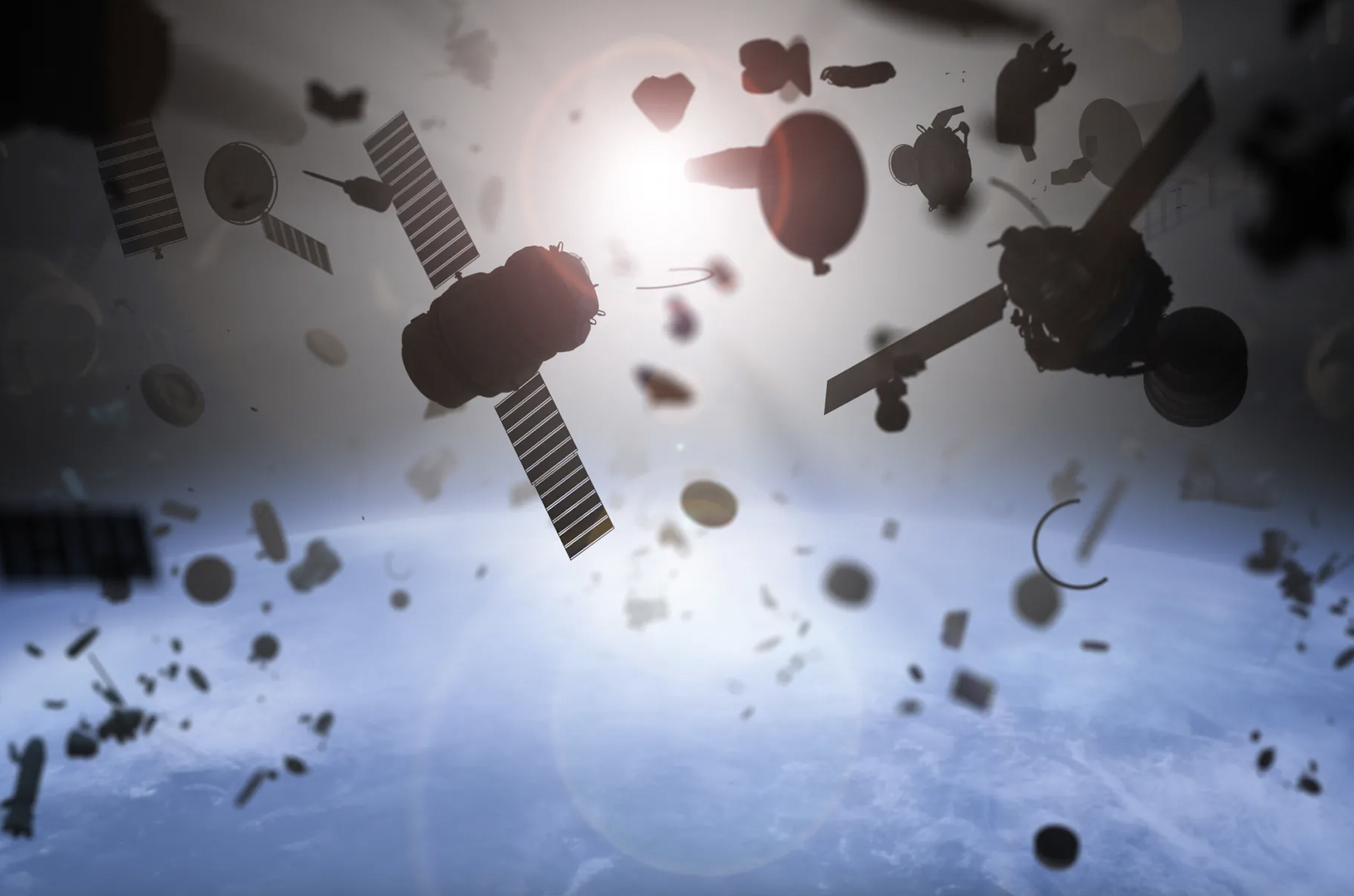}
	\includegraphics[scale=0.12]{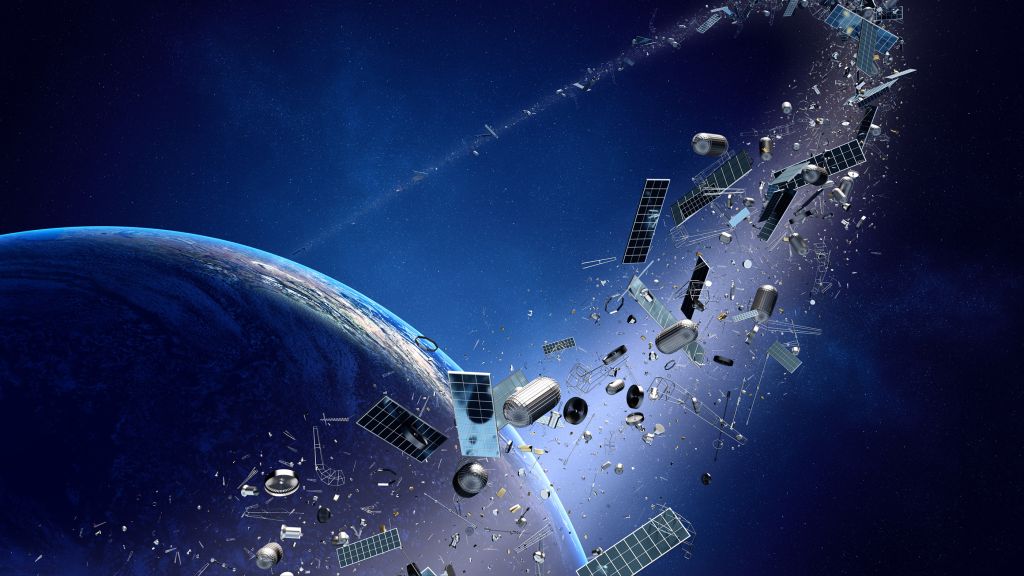}\\
	\includegraphics[scale=0.19]{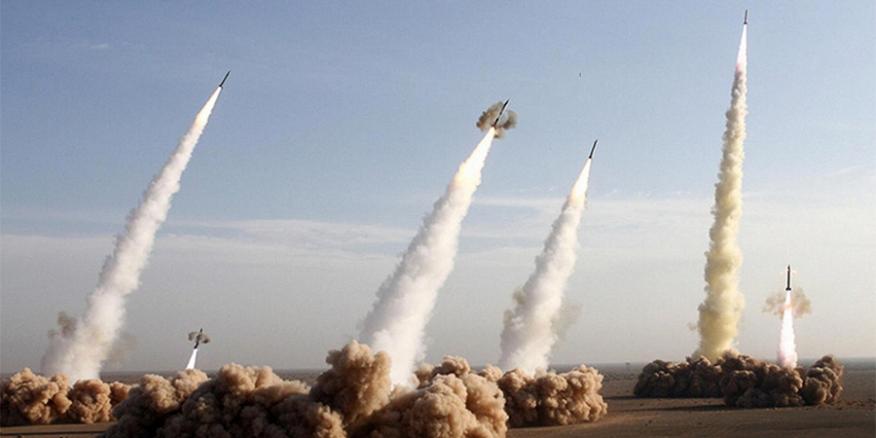}
	\includegraphics[scale=0.18]{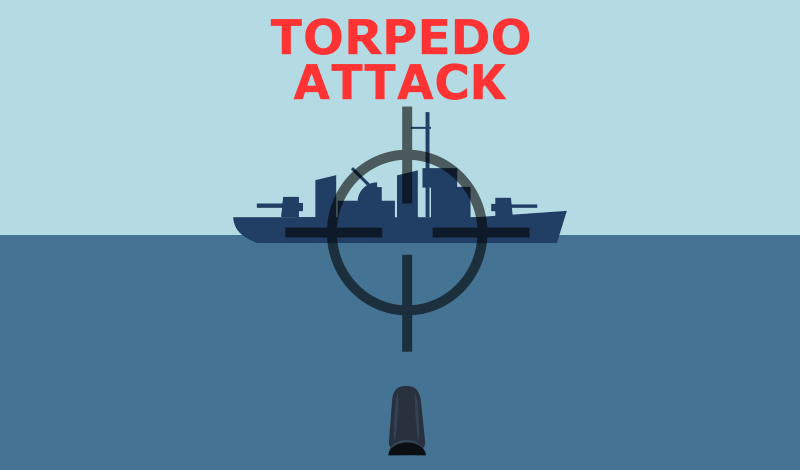}
	\caption{Space scientists are concerned with  debris or other hazardous objects approaching a satellite and destroying it. A missile (or torpedo) can quickly approach a target (or submarine) and destroy it. Source: https://images.google.com/. }
	\label{fig:satellite}
\end{figure}

In this paper, we propose a new class of stochastic processes to capture the stochastically growing nature of a process. We then obtain an optimal algorithm for detecting a change in distribution in this new class of processes.

\subsection{Satellite Application}
\label{sec:satellite}
In this section, we provide a brief discussion on satellite safety, the main motivation behind the quickest change detection problem studied in this paper. 

\begin{figure}[h]
	\centering
	\includegraphics[scale=0.25]{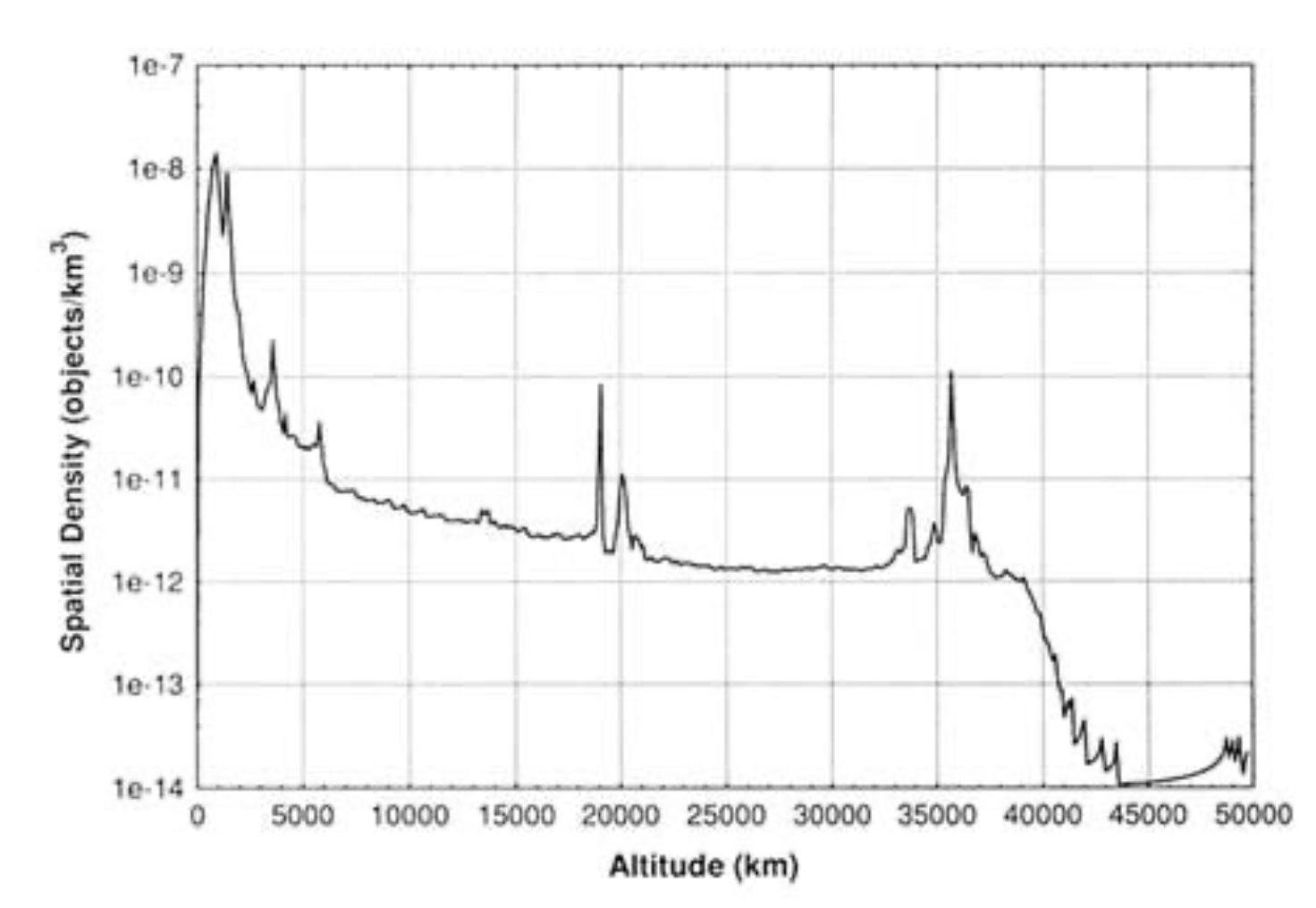}
	\includegraphics[scale=0.25]{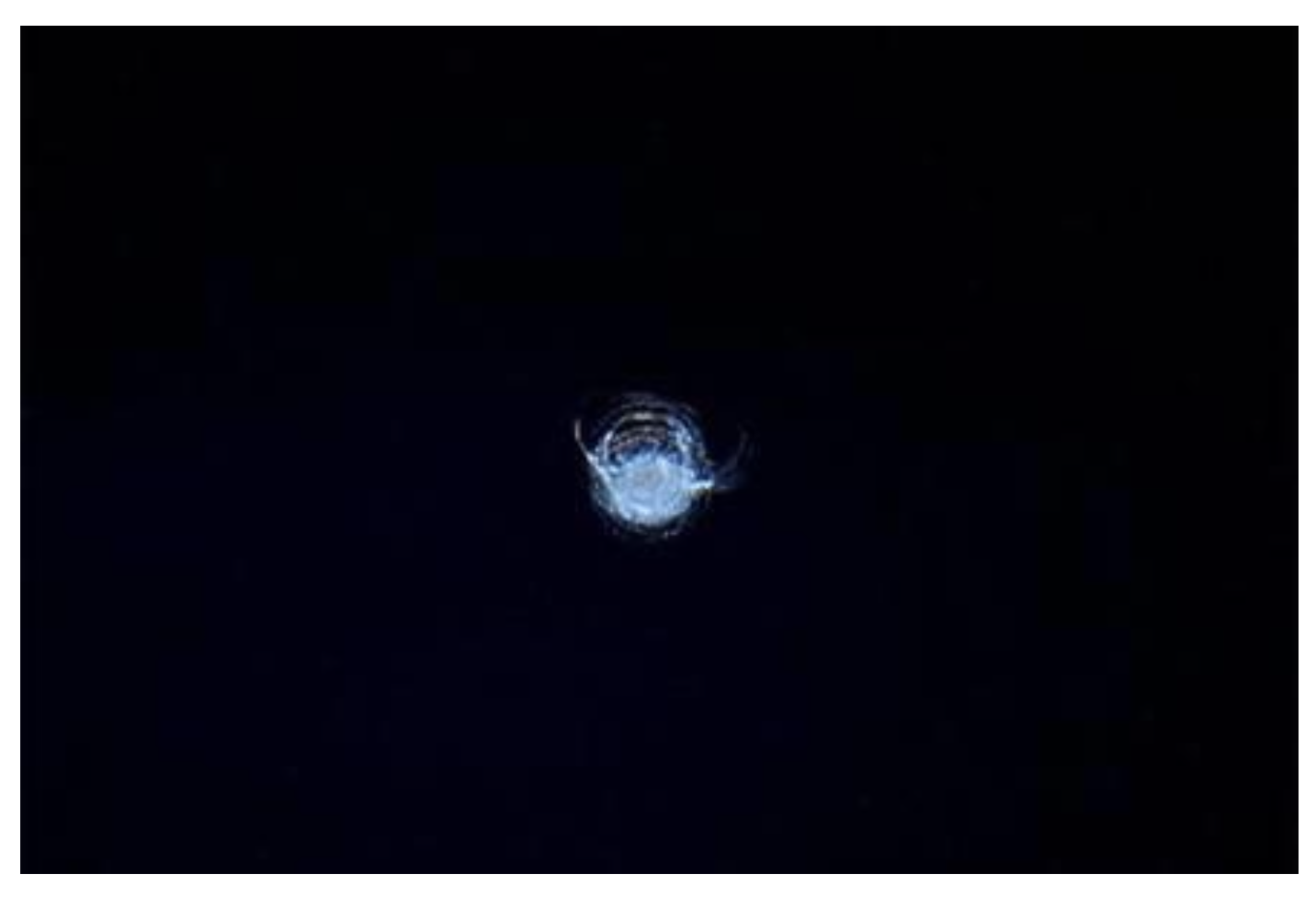}
	\includegraphics[scale=0.3]{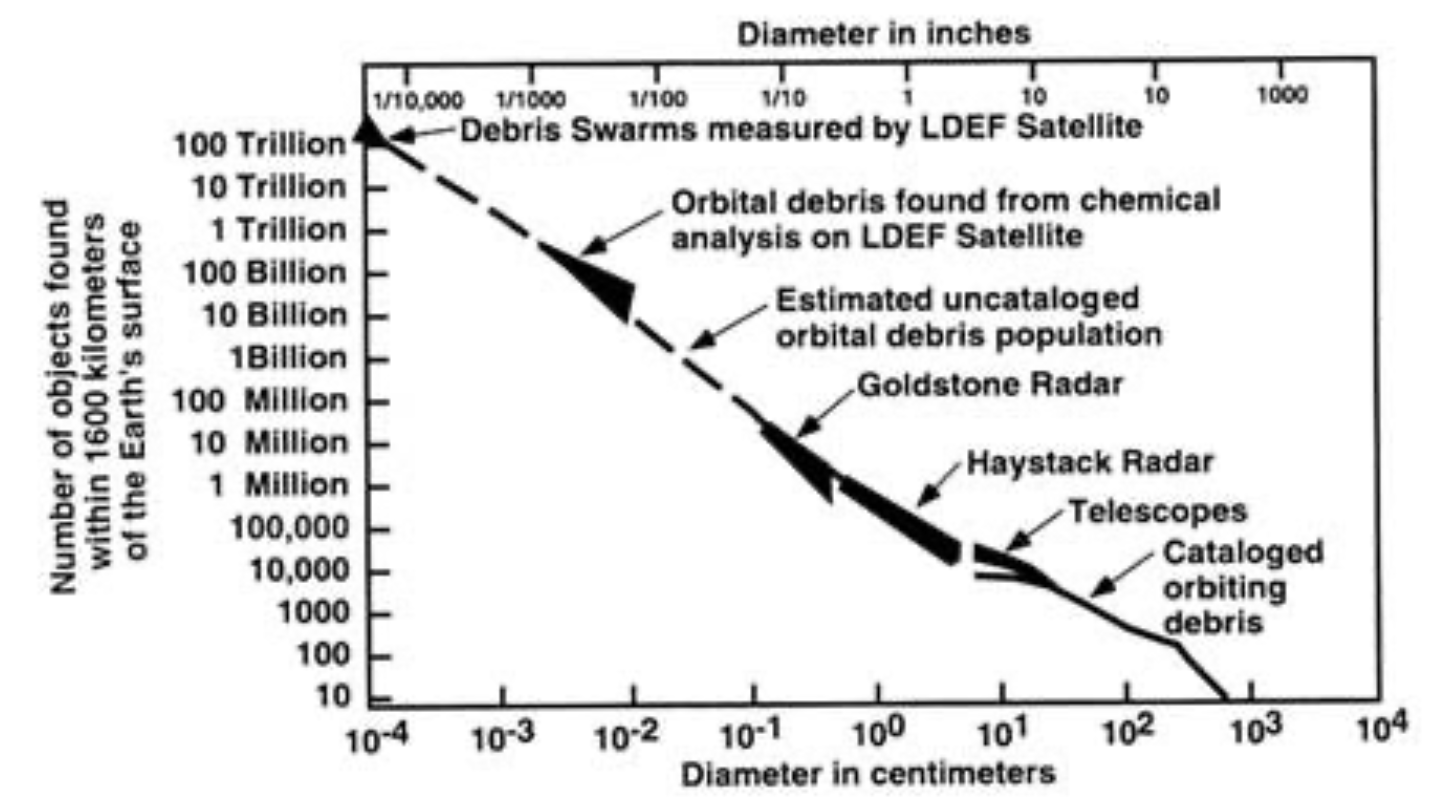}
	\caption{Left figure: Debris density in earth orbits; Right figure: Debris hit in ISS; Bottom figure: Debris size distribution.}
	\label{fig:debris}
\end{figure}

The problem of detection of the approaching object in space and tracking them finds many concrete applications as follows:
\begin{enumerate}
	\item In Fig.~\ref{fig:debris} (left and bottom),  we provide the size and density of objects present in Earth orbits. These objects are potentially hazardous for Astronauts doing Extra-Vehicular Activities (EVA) and can be detrimental to the space resources such as operational satellites, ISS. In Fig.~\ref{fig:debris} (right), we can see the impact of debris hit in ISS. In \cite{mohammed2022space, xiang2020space},  the authors have mapped the debris and detect them using on-board systems.
	\item Some of the major space accidents due to in space collisions and space debris are as follows:
	a)	A piece of space junk damaged the robotic arm of Lab within the ISS; A puncture in the thermal covering of robotic arm was observed.
	b)	The 1996 collision between the French Cerise with a catalogued debris from an Ariane rocket which tore the stabilization boom of the satellite.
	c)	Collision of Iridium-33 (a part of commercial communication network) with the Kosmos 2251, a first major satellite collision in space of 2 satellites in Earth Orbit on account of congestion and not deorbiting the defunct satellite. This resulted in more than thousand debris particles which could have potentially lead to the escalation of debris collision.
	d)	The recent collision in March 2021 between Yunhai-1 02 and debris from the Zenit-2 rocket created a debris of over 20 pieces.
	\item Debris: Artificial space debris along with the meteoroids are referred as MMOD (Micrometeoroids and Orbital Debris). These objects can cause sandblasting, especially to spacecraft appendages and optics that are difficult to be protected by any shield \cite{nasamedbrief}. Some of the types of space debris can be as follows:
	a)	spare space parts, 
	b)	deorbitted satellites, and 
	c)	launch vehicle parts.
	\item The orbital bodies such as functional and defunct satellites and their parts, lv parts, and the states of most of the catalogued objects are expected to evolve deterministically with the tolerances that are well modeled using the statistical approach illustrated in this paper. These reasonable assumptions are exploited to arrive at the useful results to enable autonomy in the future satellites. Thus the results in this paper can prove useful in Autonomous Meteors’ Avoidance Mechanism.
	\item This work envisages to firm-up the more realistic assumptions pertaining to situations such as debris characterised by Kessler syndrome and intelligent chaser systems that can be in space or on-ground and can be a potential threat to cause disaster to space resources.
	\item The immediate application of this work is to the autonomous spacecraft operations, which imparts substantial operational or commercial advantages to space industry. As the complexity of space operations is one of the major overheads associated with the satellite services, controlling this component can be critical in gaining an edge in the prevailing competition within the present space sector.
	\item One recent example can be observed when the ESA satellite required a manoeuver to avoid a likely collision with the SPACEX’s Starlink satellite.
\end{enumerate}

%



\section{Model and Problem Formulation}

\subsection{Data Model}
We introduce a new class of stochastic processes defined to model exploding nature of the post-change process: 

\medspace
\begin{definition}
	\label{def:exploding}
	We say that a process $\{X_n\}_{n \geq 1}$ with densities $\{f_n\}_{n \geq 0}$ is an {exploding process} if 
	\begin{enumerate}
		\item $\{X_n\}$ are jointly independent,
		\item $X_n \sim f_{n-1}$, $\forall n$, 
		\item $\frac{f_{n+1}(x)}{f_n(x)}$ is {increasing} in $x$, $\forall n$. We use $f_{n} \prec f_{n+1}$ to denote this {monotone likelihood ratio} ({MLR}) order. 
	\end{enumerate}
\end{definition}

If $f_{n+1}=f_n$, for all $n$, then we get an independent and identically distributed (i.i.d.) process.
Note that {MLR dominance implies }{stochastic dominance} \cite{krishnamurthy2016partially}: 
$$
\int_x^\infty f_n(x) dx \leq \int_x^\infty f_{n+1}(x) dx, \quad \forall x,n.  
$$

\subsection{Change point model} 

We assume that before change, data is i.i.d. with density $g \prec f_0$, and after change, an exploding process with sequence $\{f_n\}_{n \geq 0}$ of densities. Mathematically, there is a discrete-time $\nu$ such that 
\begin{equation*}
	X_n \sim 
	\begin{cases}
		g, &\quad \forall n < \nu, \\
		f_{n-\nu} &\quad \forall n \geq \nu.
	\end{cases}
\end{equation*}
Here we have used the notation $X \sim f$ to denote that the random variable $X$ has law $f$. Note that the post-change density of an observation depends on the location of the change point $\nu$. Our goal is to detect this change as quickly as possible, subject to a constraint on the rate of false alarms.

\subsection{Problem Formulation}
To solve the change detection problem, we are interested in two popular minimax formulations for the quickest change detection. To state the formulations, we define, for $1 \leq \nu \leq \infty$,	$\mathsf{E}_\nu$ as the expectation when the change occurs at time $\nu$. We consider the problem formulation of Lorden \cite{lord-amstat-1971}: 
\begin{equation}
	\label{eq:lorden}
	\begin{split}
		\min_\tau &\quad \sup_{\nu \geq 1} \; \text{ess sup} \; \mathsf{E}_\nu[(\tau - \nu + 1)^+| X_1, \dots, X_{\nu-1}], \\
		\text{subj. to }& \quad \mathsf{E}_\infty [\tau] \geq \gamma, 
	\end{split}
\end{equation}
where $\gamma$ is a constraint on the mean time to a false alarm. We will also consider the formulation of Pollak \cite{poll-astat-1985}:
\begin{equation}
	\label{eq:pollak}
	\begin{split}
		\min_\tau &\quad \sup_{\nu \geq 1} \; \mathsf{E}_\nu[\tau - \nu   |  \tau \geq \nu], \\
		\text{subj. to }& \quad \mathsf{E}_\infty [\tau] \geq \gamma, 
	\end{split}
\end{equation}
where again $\gamma$ is a constraint on the mean time to a false alarm.

\section{Candidate Algorithm: Exploding Cumulative Sum Algorithm}

We propose to use the following exploding Cumulative Sum (EX-CUSUM) statistic for change detection: 
\begin{equation}
	W_n = \max_{1 \leq k \leq n} \; \sum_{i=k}^n \; \log \frac{f_{i-k}(X_i)}{g(X_i)}. 
\end{equation}
The term $\sum_{i=k}^n \; \log \frac{f_{i-k}(X_i)}{g(X_i)}$ is the log-likelihood ratio of the observations between post-change and pre-change distributions, conditioned that the change occurs at time $k$. Since we do not know the change point, we take the maximum of all possible values at time $n$, i.e., $1 \leq k \leq n$. 

In the above statistic, note that the likelihood ratio of an observation $X_i$ at time $i$, 
$$
\frac{f_{i-k}(X_i)}{g(X_i)},
$$
depends on the relative distance $i-k$ between time $i$ and the hypothesis $k$ of the change point. 
It is because of this reason the EX-CUSUM statistic is not a special case of the generalized CUSUM statistic of Lai \cite{lai-ieeetit-1998}. 

To detect the change in law from i.i.d. with law $g$ to an exploding process $\{f_n\}$, we stop the first time the EX-CUSUM statistic is above a threshold $A$: 
\begin{equation}
	\tau_{ec} = \inf\{n \geq 1: W_n > A\}. 
\end{equation}
We select the threshold $A$ to control the rate of false alarms: the higher the threshold, the smaller the mean time to a false alarm $\Expect_\infty[\tau_{ec}]$ (this fact will be formally proved below). 

Our goal in this paper is to characterize conditions under which the EX-CUSUM algorithm is asymptotically optimal for Lorden's and Pollak's problems in \eqref{eq:lorden} and \eqref{eq:pollak}.

\section{Asymptotic Lower Bound on the Performance}
\subsection{Lai's Asymptotic Lower Bound}
In \cite{lai-ieeetit-1998}, Lai reported a general minimax theory for the quickest change detection. We first review it and discuss its limitations. 

It is assumed in \cite{lai-ieeetit-1998} that the pre-change densities are $f_0(X_i | X_1, \dots, X_{i-1})$ at time $i$ and the post-change densities are $f_1(X_i | X_1, \dots, X_{i-1})$ giving us the log-likelihood ratio
$$
Z_i = \log \frac{f_1(X_i | X_1, \dots, X_{i-1})}{f_0(X_i | X_1, \dots, X_{i-1})}.
$$
Note that the densities are general conditional densities allowing for data dependence but are not a function of the hypothesis on the change point (as defined in the paper). Lai showed that if $Z_i$s are such that there exists an information number $I > 0$ satisfying
\begin{equation}
	\begin{split}
		\lim_{n \to \infty} \; \sup_{\nu \geq 1} \; \text{ess sup } \mathsf{P}_\nu &\left(\max_{t \leq n} \sum_{i = \nu }^{\nu + t} Z_i \geq I(1+\delta)n  \; \; \bigg| \; X_1, \dots, X_{\nu-1}\right) = 0,
	\end{split}
\end{equation}
then we have the universal lower bound as $\gamma \to \infty$, 
\begin{equation}
	\begin{split}
		\min_{\tau} \; \sup_{\nu \geq 1} \; &\text{ess sup} \; \mathsf{E}_\nu[(\tau - \nu + 1)^+| X_1, \dots, X_{\nu-1}]\\
		&\geq \quad  \min_{\tau}  \; \sup_{\nu \geq 1} \; \mathsf{E}_\nu[\tau - \nu   |  \tau \geq \nu]\\
		& \quad \geq \quad \frac{\log \gamma}{I} (1+o(1)). 
	\end{split}
\end{equation}
Here the minimum over $\tau$ is over those stopping times satisfying $\Expect_\infty[\tau] \geq \gamma$. 
Lai further showed that under certain additional conditions on the $Z_i$s, the generalized CUSUM algorithm,
$$
\tau_c = \min\left\{n \geq 1: \max_{1 \leq k \leq n} \sum_{i=k}^n Z_i \geq \log(\gamma)\right\},
$$
is asymptotically optimal for both Lorden and Pollak's problems. 	 Specifically, 
$$
\mathsf{E}_\infty[\tau_c] \geq \gamma.
$$
Further, if $Z_i$s satisfy 
\begin{equation}
	\begin{split}
		\lim_{n \to \infty} \; \sup_{k \geq \nu \geq 1} \; \text{ess sup } \mathsf{P}_\nu & \left(\frac{1}{n}\sum_{i = k }^{k +n} Z_i \leq I - \delta \; \bigg| \; X_1, \dots, X_{k-1}\right) = 0,
	\end{split}
\end{equation}
then as $\gamma \to \infty$, $\tau_c$ achieves the lower bound:
\begin{equation}
	\begin{split}
		\sup_{\nu \geq 1} \; \text{ess sup} \; \mathsf{E}_\nu &[(\tau_c - \nu + 1)^+| X_1, \dots, X_{\nu-1}]  \leq  \frac{\log \gamma}{I}(1+o(1)), \quad \gamma \to \infty. 
	\end{split}
\end{equation}

It is not clear if these results are valid for the exploding process setting because in the latter setting likelihood ratios do depend on where the change point is. We discuss this next. 

\subsection{Sufficient Conditions for Change Point Dependent Likelihoods}

In this section, we extend Lai's results to the case of change point-dependent likelihoods. We continue to allow data dependence across time to state the more general result. Define the log-likelihood ratio at time $n$ when change occurs at $\nu$ as
$$
Z_{n, \nu} = \log \frac{f_{n,\nu}(X_n | X_1, \dots, X_{n-1})}{f_0(X_n|X_1, \dots, X_{n-1})}.
$$
\begin{theorem}
	\label{thm:modifiedconds}
	\begin{enumerate}
		\item Let there exist a positive number $I$ such that the bivariate log likelihood ratios $\{Z_{n, \nu}\}$ satisfy the following  condition: 
		\begin{equation}
			\label{eq:Znnu_LB}
			\begin{split}
				\lim_{n \to \infty} \; \sup_{\nu \geq 1} \; \text{ess sup } \mathsf{P}_\nu &\left(\max_{t \leq n} \sum_{i = \nu }^{\nu + t} Z_{i, \nu} \geq I(1+\delta)n \; \bigg| \; X_1, \dots, X_{\nu-1}\right) = 0.
			\end{split}
		\end{equation}
		Then, we have the universal lower bound as $\gamma \to \infty$, 
		\begin{equation}
			\begin{split}
				\min_{\tau} \; \sup_{\nu \geq 1} \; &\text{ess sup} \; \mathsf{E}_\nu[(\tau - \nu + 1)^+| X_1, \dots, X_{\nu-1}]\\
				&\geq \quad  \min_{\tau}  \; \sup_{\nu \geq 1} \; \mathsf{E}_\nu[\tau - \nu   |  \tau \geq \nu]\\
				& \quad \geq \quad \frac{\log \gamma}{I} (1+o(1)). 
			\end{split}
		\end{equation}
		Here the minimum over $\tau$ is over those stopping times satisfying $\Expect_\infty[\tau] \geq \gamma$. 
		\item The following modified generalized CUSUM algorithm,
		$$
		\tau_{mc} = \min\left\{n \geq 1: \max_{1 \leq k \leq n} \sum_{i=k}^n Z_{i,k} \geq \log(\gamma)\right\},
		$$
		satisfies
		$$
		\mathsf{E}_\infty[\tau_{mc}] \geq \gamma.
		$$
		\item Let the bivariate log-likelihood ratios $\{Z_{n, \nu}\}$ also satisfy 
		\begin{equation}
			\label{eq:Znnu_UB}
			\begin{split}
				\lim_{n \to \infty} \; \sup_{k \geq \nu \geq 1} \; \text{ess sup } \mathsf{P}_\nu & \left(\frac{1}{n}\sum_{i = k }^{k +n} Z_{i,k} \leq I - \delta \; \bigg| \; X_1, \dots, X_{k-1}\right) = 0.
			\end{split}
		\end{equation}
		Then as $\gamma \to \infty$, $\tau_{mc}$ achieves the lower bound:
		\begin{equation}
			\begin{split}
				\sup_{\nu \geq 1} \; \text{ess sup} \; \mathsf{E}_\nu &[(\tau_{mc} - \nu + 1)^+| X_1, \dots, X_{\nu-1}]  \leq  \frac{\log \gamma}{I}(1+o(1)), \quad \gamma \to \infty. 
			\end{split}
		\end{equation}
	\end{enumerate}
\end{theorem}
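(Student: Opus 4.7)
The plan is to adapt Lai's 1998 proof to the change-point-dependent-likelihood setting, keeping the three-part skeleton (information-theoretic lower bound, false-alarm control via martingales, matching upper bound via a block argument) and verifying at each step that passing from $Z_i$ to the bivariate $Z_{i,\nu}$ does not break the argument.

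For Part 1, I would follow Lai's change-of-measure argument. Fix $\epsilon,\delta>0$, set $m_\gamma = \lfloor(1-\epsilon)\log\gamma/I\rfloor$, and, for any stopping time $\tau$ with $\mathsf{E}_\infty[\tau]\geq\gamma$, examine $\mathsf{P}_\nu(\tau<\nu+m_\gamma\mid X_1,\dots,X_{\nu-1})$. Converting to $\mathsf{P}_\infty$ via the likelihood-ratio identity and splitting on the event that $\max_{t\leq m_\gamma}\sum_{i=\nu}^{\nu+t}Z_{i,\nu}\geq I(1+\delta)m_\gamma$ yields, using condition \eqref{eq:Znnu_LB}, an upper bound on this probability. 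Markov's inequality then forces $\mathsf{E}_\nu[(\tau-\nu+1)^+\mid X_1,\dots,X_{\nu-1}]\geq m_\gamma(1-o(1))$, which gives both displayed lower bounds after taking $\epsilon\to 0$. Crucially, the index $\nu$ is held fixed throughout the change of measure, so the extra dependence of $Z_{i,\nu}$ on $\nu$ never enters as a quantifier and the usual argument transfers verbatim.

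For Part 2, the key observation is that for each fixed $k$, the process $M_n^{(k)}:=\prod_{i=k}^{n}\exp(Z_{i,k})$ is, under $\mathsf{P}_\infty$, a nonnegative martingale with mean one, because $Z_{i,k}$ is exactly the log-likelihood ratio of the change-at-$k$ law against the no-change law restricted to $X_k,\dots,X_n$. Writing $\tau_{mc}=\inf\{n:\max_{1\leq k\leq n}\log M_n^{(k)}\geq\log\gamma\}$ and applying a standard argument (e.g.\ the Page/Moustakides regenerative bound, or Lorden's inequality built from the one-sided SPRT stopping times $\tau^{(k)}:=\inf\{n\geq k:\log M_n^{(k)}\geq\log\gamma\}$ with $\mathsf{P}_\infty(\tau^{(k)}<\infty)\leq 1/\gamma$ from optional stopping) yields $\mathsf{E}_\infty[\tau_{mc}]\geq\gamma$.

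For Part 3, I would bound the delay by restricting the maximum in $\tau_{mc}$ to the single branch $k=\nu$ and then to shifted branches $k=\nu+jN$ with $N=\lceil(1+\epsilon)\log\gamma/I\rceil$. Condition \eqref{eq:Znnu_UB}, applied with $k=\nu+jN$, makes $\mathsf{P}_\nu\bigl(\sum_{i=k}^{k+N}Z_{i,k}\geq\log\gamma\mid X_1,\dots,X_{k-1}\bigr)$ tend to one uniformly in $j$, so a geometric-trials argument on successive blocks gives $\mathsf{E}_\nu[(\tau_{mc}-\nu+1)^+\mid X_1,\dots,X_{\nu-1}]\leq N(1+o(1))$, and taking $\epsilon\to 0$ produces the claimed $\log\gamma/I\cdot(1+o(1))$ bound.

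The main obstacle I anticipate is Part 3, specifically confirming that the supremum over $k\geq\nu$ in \eqref{eq:Znnu_UB} is strong enough to drive the block argument. In Lai's original setting, the post-change LLN for a single $Z$ process is enough; here the post-change density of $X_i$ depends on $i-\nu$, so the growth rate of $\sum_{i=k}^{k+n}Z_{i,k}$ must be at least $I$ \emph{uniformly} in the starting index $k$, and \eqref{eq:Znnu_UB} is precisely this uniform essential-supremum condition. Getting the uniformity to survive the geometric-trials step, and ensuring the $o(1)$ in the block probability is uniform in $j$, is the substantive technical point; the remaining bookkeeping (passing from the unconditional bound to the essential-supremum one, and handling the $(\tau-\nu+1)^+$ truncation) is routine.
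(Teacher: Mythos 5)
Your plan for Parts 1 and 3 is exactly the paper's: the authors' proof consists of observing that Lai's change-of-measure lower bound and his block-wise delay analysis go through verbatim once $Z_i$ is replaced by $Z_{i,\nu}$, and you have correctly identified the two points that make the substitution harmless --- in the lower bound $\nu$ is frozen before the measure change so the extra index never interacts with a quantifier, and in the delay bound the supremum over $k\geq\nu$ in \eqref{eq:Znnu_UB} is precisely the uniformity needed for the geometric-trials recursion over blocks started at $k=\nu+jN$. The one place you genuinely diverge is Part 2. The paper explicitly states that it is unclear whether Lai's false-alarm argument extends to change-point-dependent likelihoods, and instead lower-bounds $\tau_{mc}$ by the Shiryaev--Roberts stopping time (replacing the max over $k$ by the sum over $k$), then applies optional sampling to the $\mathsf{P}_\infty$-martingale $R_n-n$ to get $\mathsf{E}_\infty[\tau_{mc}]\geq\mathsf{E}_\infty[\tau_{esr}]=\mathsf{E}_\infty[R_{\tau_{esr}}]\geq e^{\log\gamma}$; this is spelled out for the EX-CUSUM in Theorem~\ref{thm:FAR}. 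You instead propose Lorden's decomposition of $\tau_{mc}$ into one-sided SPRTs $\tau^{(k)}$ with $\mathsf{P}_\infty(\tau^{(k)}<\infty)\leq 1/\gamma$ from Ville's inequality. Both routes are standard and correct for the paper's actual model (i.i.d.\ pre-change data); the SR route has the advantage of being a one-line martingale computation that survives arbitrary pre-change dependence, which matters because the theorem is stated for general conditional densities $f_0(X_n\mid X_1,\dots,X_{n-1})$ --- to cover that case with Lorden's lemma you would need its conditional (ess-sup) form, since the classical statement assumes i.i.d.\ observations under $\mathsf{P}_\infty$. Your parenthetical fallback to the Page/Moustakides regenerative bound would not apply here, as that argument is tied to the i.i.d.\ CUSUM; drop it in favor of either the Lorden or the SR route.
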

\begin{proof}
	The lower bound result goes through by replacing $Z_i$s by $Z_{n, \nu}$s in \cite{lai-ieeetit-1998}. This is because the proof relies on a change of measure argument to get the lower bound. Since the likelihood ratios here are a function of $Z_{n, \nu} $, the same change of measure argument works. 
	The proof of detection delay also goes through provided the above conditions are satisfied and everywhere $Z_i$ are replaced by $Z_{i,k}$. 
	It is not clear to the authors whether Lai's proof for the mean time to a false alarm result can be extended to the time-dependent setting. But, one can use another technique based on the Shiryaev-Roberts statistic and optional sampling theorem. Since the latter technique is classical \cite{tart-niki-bass-2014}, we skip the details. 
\end{proof}

While this paper was being written, the authors were made aware of another paper \cite{liang2021quickest} in which more general sufficient conditions for optimality (more general
than those in \cite{lai-ieeetit-1998}) have been established. In comparison with \cite{liang2021quickest}, we provide a novel way (using MLR order) to model stochastically growing nonstationary post-change process. In addition, our proof techniques for verifying these optimality conditions are slightly different and should be of independent interest. 
\medspace
\medspace

\section{Optimality of EX-CUSUM Algorithm}
In this section, we first simplify the conditions in Theorem~\ref{thm:modifiedconds} for exploding processes as defined in Definition~\ref{def:exploding}. We then provide additional comments on the simplified conditions to guarantee the optimality of the EX-CUSUM algorithm.

\subsection{Simplifying Lower Bound Condition for Exploding Processes}

In the theorem below, we show that the lower bound condition \eqref{eq:Znnu_LB} can be simplified in the case of exploding processes. 

\medspace
\medspace
\medspace
\begin{theorem}
	\label{thm:simplecondLB}
	To satisfy
	\begin{equation}
		\label{eq:Znnu_LB_2}
		\begin{split}
			\lim_{n \to \infty} \; \sup_{\nu \geq 1} \; \text{ess sup } \mathsf{P}_\nu &\left(\max_{t \leq n} \sum_{i = \nu }^{\nu + t} Z_{i, \nu} \geq I(1+\delta)n \; \bigg| \; X_1, \dots, X_{\nu-1}\right) = 0
		\end{split}
	\end{equation}
	for some $0 < I < \infty$,  it is sufficient that 
	$$
	\frac{1}{n} \sum_{k=1}^n Z_{k,1}  =  \frac{1}{n} \sum_{k=1}^n \log \frac{f_{k-1}(X_k)}{g(X_k)}\; \to \; I, \quad \text{a.s. under } \mathsf{P}_1. 
	$$
\end{theorem}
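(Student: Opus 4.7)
The plan is to collapse the ``$\sup_\nu$, ess sup'' on the left side of \eqref{eq:Znnu_LB_2} down to a single unconditional probability under $\mathsf{P}_1$, and then upgrade the hypothesized a.s.\ convergence of time averages into a bound on partial-sum maxima. I expect no serious obstacle, only bookkeeping.

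\emph{Step 1: Reduction to $\mathsf{P}_1$.} By Definition~\ref{def:exploding}(i), the observations are jointly independent, so under $\mathsf{P}_\nu$ the pre-change block $(X_1,\dots,X_{\nu-1})$ is independent of the post-change block $(X_\nu,X_{\nu+1},\dots)$. Since $Z_{i,\nu}=\log(f_{i-\nu}(X_i)/g(X_i))$ is a measurable function of $X_i$ alone for $i\ge \nu$, the conditional probability in \eqref{eq:Znnu_LB_2} agrees with its unconditional version and the essential supremum is trivial. Moreover, under $\mathsf{P}_\nu$ the vector $(X_\nu,X_{\nu+1},\dots,X_{\nu+n})$ has independent coordinates with $X_{\nu+j}\sim f_j$, which is exactly the joint law of $(X_1,\dots,X_{n+1})$ under $\mathsf{P}_1$. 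Writing $S_m:=\sum_{k=1}^m Z_{k,1}$, the expression inside the $\sup_\nu$ in \eqref{eq:Znnu_LB_2} therefore equals the $\nu$-free quantity $\mathsf{P}_1\!\left(\max_{1\le m\le n+1} S_m \ge I(1+\delta)n\right)$, and it remains to show this vanishes as $n\to\infty$.

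\emph{Step 2: From a.s.\ averages to a sample-maximum bound.} Fix $\epsilon>0$. The hypothesis $S_m/m\to I$ a.s.\ under $\mathsf{P}_1$, combined with Egoroff's theorem, yields an integer $M=M(\epsilon,\delta)$ with
$$\mathsf{P}_1\!\left(\sup_{m\ge M}\left|\frac{S_m}{m}-I\right|\ge \frac{\delta}{2}\right)<\epsilon.$$
On the complementary event, $S_m\le I(1+\delta/2)\,m$ for every $m\ge M$, while $C_M:=\max_{m<M}|S_m|$ is an a.s.\ finite random variable independent of $n$. Hence $\max_{1\le m\le n+1} S_m\le C_M+I(1+\delta/2)(n+1)$, which is at most $I(1+\delta)n$ once $n$ exceeds a threshold $n_0(C_M,I,\delta)$. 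Truncating $C_M$ on a further set of probability greater than $1-\epsilon$ makes $n_0$ deterministic; for all $n$ beyond that deterministic threshold, the probability in question is at most $2\epsilon$. Letting $\epsilon\downarrow 0$ finishes the proof.

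The point worth flagging is that the MLR clause of Definition~\ref{def:exploding} plays no role in this particular reduction: it is the joint-independence clause alone that instantly trivializes both the essential supremum and the supremum over $\nu$, leaving only a routine strong-law-to-maximum upgrade. The MLR structure will earn its keep later, in verifying the matching upper-bound condition \eqref{eq:Znnu_UB} and in controlling the mean time to false alarm of $\tau_{ec}$.
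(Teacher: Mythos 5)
Your proof is correct and follows essentially the same route as the paper: both collapse the $\sup_\nu$ and the essential supremum to a single $\mathsf{P}_1$-probability using joint independence and the fact that the post-change likelihood ratios depend only on $i-\nu$, and both then upgrade the assumed strong law to control the running maximum of the partial sums. The only difference is that where the paper cites an external reference (Theorem 5.1 of \cite{bane-tit-2021}) for the fact that $S_m/m \to I$ a.s.\ forces $n^{-1}\max_{m\le n}S_m \to I$, you prove it directly via Egoroff --- modulo the harmless slip that your tolerance should be $I\delta/2$ rather than $\delta/2$ if you want to conclude $S_m \le I(1+\delta/2)\,m$.
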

\begin{proof}
	Because of independence, we have
	\begin{equation*}
		\begin{split}
			\sup_{\nu \geq 1} \; &\text{ess sup } \mathsf{P}_\nu \left(\max_{t \leq n} \sum_{i = \nu }^{\nu + t} Z_{i,\nu} \geq I(1+\delta)n  \; \bigg| \; X_1, \dots, X_{\nu-1}\right) \\
			&=\sup_{\nu \geq 1} \mathsf{P}_\nu \left(\max_{t \leq n} \sum_{i = \nu }^{\nu + t} \log \frac{f_{i-\nu}(X_i)}{g(X_i)} \geq I(1+\delta)n \right).
		\end{split}
	\end{equation*}
	
	Since likelihood ratios are computed relative to the change point $\nu$, the probability on the right is not a function of $\nu$. This implies
	\begin{equation*}
		\begin{split}
			\sup_{\nu \geq 1} \; \mathsf{P}_\nu \left(\max_{t \leq n} \sum_{i = \nu }^{\nu + t} \log \frac{f_{i-\nu}(X_i)}{g(X_i)} \geq I(1+\delta)n \right) 
			&= \mathsf{P}_1 \left(\max_{t \leq n} \sum_{i = 1 }^{1 + t} \log \frac{f_{i-1}(X_i)}{g(X_i)} \geq I(1+\delta)n \right) \\
			&= \mathsf{P}_1 \left(\frac{1}{n}\max_{t \leq n} \sum_{i = 1 }^{1 + t} \log \frac{f_{i-1}(X_i)}{g(X_i)} \geq I(1+\delta) \right).
		\end{split}
	\end{equation*}
	Note that if the post-change process evolves differently for different change points, then the above simplification may not be true. Now, 
	if 
	$$\frac{1}{n} \sum_{i=1}^n \log \frac{f_{i-1}(X_i)}{g(X_i)}\; \to \; I, \quad \text{a.s. under} \; \mathsf{P}_1,$$ 
	then
	$$
	\frac{1}{n}\max_{t \leq n} \sum_{i = 1 }^{1 + t} \log \frac{f_{i,1}(X_i)}{g(X_i)} \to I, \quad \text{a.s. under} \; \mathsf{P}_1. 
	$$
	For proof of the above fact, see the Proof of Theorem 5.1 in \cite{bane-tit-2021}. 
	This proves the theorem because convergence almost surely implies convergence in probability. 
\end{proof}

\medspace
\medspace
We still need to characterize conditions under which
$$\frac{1}{n} \sum_{i=1}^n \log \frac{f_{i-1}(X_i)}{g(X_i)}\; \to \; I, \quad \text{a.s. under} \; \mathsf{P}_1,$$
for an exploding process. This will be done in Section~\ref{sec:Cantelli}.

\subsection{Controlling the False Alarm}
In this section, we show that by setting the threshold $A=\log \gamma$ in the EX-CUSUM algorithm, the constraints on the mean time to a false alarm can be satisfied. This is the content of the next theorem. 

Recall that 
$$
W_n = \max_{1 \leq k \leq n} \; \sum_{i=k}^n \; \log \frac{f_{i-k}(X_i)}{g(X_i)},
$$
$$
\tau_{ec} = \inf\{n \geq 1: W_n > A\}. 
$$

\begin{theorem}
	\label{thm:FAR}
	Setting $A = \log(\gamma)$ ensures that 
	$$
	\mathsf{E}_\infty[\tau_{ec}] \geq \gamma. 
	$$
\end{theorem}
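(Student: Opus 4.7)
The plan is to dominate the EX-CUSUM stopping time by a Shiryaev--Roberts--type stopping time on which the optional sampling theorem applies cleanly. I would first introduce the companion statistic
$$
R_n \;=\; \sum_{k=1}^n \; \prod_{i=k}^n \; \frac{f_{i-k}(X_i)}{g(X_i)}, \qquad \tau_{sr} \;=\; \inf\{n \geq 1 : R_n > \gamma\}.
$$
Because $e^{W_n}$ is the maximum over $k$ of exactly the nonnegative products that $R_n$ sums, we have $e^{W_n} \leq R_n$, so $\{W_n > \log \gamma\} \subseteq \{R_n > \gamma\}$, and setting $A = \log \gamma$ gives $\tau_{sr} \leq \tau_{ec}$. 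It therefore suffices to prove $\mathsf{E}_\infty[\tau_{sr}] \geq \gamma$.

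The key identity is that $M_n := R_n - n$ is a $\mathsf{P}_\infty$-martingale with respect to $\mathcal{F}_n = \sigma(X_1, \dots, X_n)$. Abbreviating $L_{i,k} = f_{i-k}(X_i)/g(X_i)$ and splitting off the last factor in each product, $R_n = L_{n,n} + \sum_{k=1}^{n-1} L_{n,k} \prod_{i=k}^{n-1} L_{i,k}$. Under $\mathsf{P}_\infty$ the whole sequence is i.i.d.\ with density $g$, so for every $k$, $\mathsf{E}_\infty[L_{n,k} \mid \mathcal{F}_{n-1}] = \int (f_{n-k}(x)/g(x)) \, g(x)\, dx = 1$. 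This gives $\mathsf{E}_\infty[R_n \mid \mathcal{F}_{n-1}] = R_{n-1} + 1$, hence the martingale property of $M_n$ (with $R_0 = 0$, so $M_0 = 0$).

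Next I would apply optional sampling at the bounded stopping time $\tau_{sr} \wedge n$, obtaining $\mathsf{E}_\infty[R_{\tau_{sr} \wedge n}] = \mathsf{E}_\infty[\tau_{sr} \wedge n]$. Because $R_m \geq 0$ and $R_{\tau_{sr}} > \gamma$ on $\{\tau_{sr} \leq n\}$, the left side is at least $\gamma \, \mathsf{P}_\infty(\tau_{sr} \leq n)$, so $\gamma \, \mathsf{P}_\infty(\tau_{sr} \leq n) \leq \mathsf{E}_\infty[\tau_{sr} \wedge n] \leq \mathsf{E}_\infty[\tau_{sr}]$. Letting $n \to \infty$ and using monotone convergence yields $\gamma \, \mathsf{P}_\infty(\tau_{sr} < \infty) \leq \mathsf{E}_\infty[\tau_{sr}]$. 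If $\mathsf{P}_\infty(\tau_{sr} < \infty) = 1$ this is the desired bound; otherwise $\mathsf{E}_\infty[\tau_{sr}] = \infty$ and the bound is trivial. Chaining with $\tau_{sr} \leq \tau_{ec}$ closes the argument.

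The only real care point is the martingale identity itself: one must remember that under $\mathsf{P}_\infty$ the observations remain i.i.d.\ with law $g$, so the post-change densities $f_{i-k}$ appear only as integrands in the likelihood ratios, and it is precisely the fact that each $f_{i-k}$ is a probability density that makes every $L_{n,k}$ a conditional mean-one random variable under $\mathsf{P}_\infty$. I do not anticipate a genuine obstacle: once the martingale is set up, the remainder is standard Doob/optional-sampling bookkeeping. Notably the argument never invokes the MLR/exploding structure, so the bound holds for any sequence $\{f_n\}$ of densities, mirroring the classical CUSUM situation.
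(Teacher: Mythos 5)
Your proof is correct and follows essentially the same route as the paper: dominate $\tau_{ec}$ by the Shiryaev--Roberts stopping time, use that $R_n - n$ is a $\mathsf{P}_\infty$-martingale, and apply Doob's optional sampling theorem. Your execution of the optional-sampling step (stopping at $\tau_{sr}\wedge n$, using nonnegativity of $R_n$, and letting $n\to\infty$) is a slightly cleaner way of handling the case $\mathsf{E}_\infty[\tau_{sr}]=\infty$ than the paper's uniform-integrability estimate, and you also supply the conditional-expectation verification of the martingale property that the paper asserts without proof, but these are refinements of the same argument rather than a different one.
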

\begin{proof} This proof technique is standard and has been used, for example, in \cite{tart-niki-bass-2014}. Because logarithm is monotonic and the maximum of positive quantities is always less than their sum, we have  
	\begin{equation*}
		\begin{split}
			\tau_{ec} &= \inf\left\{n \geq 1: \max_{1 \leq k \leq n} \; \sum_{i=k}^n \; \log \frac{f_{i-k}(X_i)}{g(X_i)}  > A\right\}\\
			&=\inf\left\{n \geq 1: \max_{1 \leq k \leq n} \; \prod_{i=k}^n \; \frac{f_{i-k}(X_i)}{g(X_i)}  > e^A\right\}\\
			&\geq \inf\left\{n \geq 1: \sum_{1 \leq k \leq n} \; \prod_{i=k}^n \; \frac{f_{i-k}(X_i)}{g(X_i)}  > e^A\right\} := \tau_{esr}. 
		\end{split}
	\end{equation*}

	The process
	$$
	R_n -n := \sum_{1 \leq k \leq n} \; \prod_{i=k}^n \; \frac{f_{i-k}(X_i)}{g(X_i)} - n
	$$
	is a $\mathsf{P}_\infty$-martingale. Assuming $\mathsf{E}[\tau_{esr}] < \infty$ (otherwise the false alarm constraint is trivially satisfied), we have
	\begin{equation*}
		\begin{split}
			\mathsf{E}\left[|R_{n} - n|; \{\tau_{esr} > n\}\right] &\leq \mathsf{E}\left[e^A + \tau_{esr}; \{\tau_{esr} > n\}\right] \\
			&\to 0, \quad n \to \infty.
		\end{split}
	\end{equation*}
	Thus, by Doob's optional sampling theorem \cite{chow-robb-sieg-book-1971},
	$$
	\mathsf{E}\left[R_{\tau_{esr}} - \tau_{esr} \right] = 0,
	$$
	and 
	$$
	\mathsf{E}\left[\tau_{esr} \right] = \mathsf{E}\left[R_{\tau_{esr}} \right] \geq e^A. 
	$$
	Now, set $A = \log \gamma$ to complete the proof. 
\end{proof}

\subsection{Simplifying Upper Bound Condition for Delay Analysis of an Exploding Process}
In this section, we simplify the condition \eqref{eq:Znnu_UB} for exploding processes. 
To complete this step, we need an intermediate result.
\begin{lemma}
	\label{lem:stocdom}
	Let $f(x_1, x_2, \dots, x_n)$ be a continuous function increasing in each of its arguments, with other arguments fixed. If $\{X_n\}$ is a stochastic process generated according to an exploding process, then for all $n, m, t$, 
	\begin{equation}
		\begin{split}
			\mathsf{P}& \left(f(X_n, X_{n+1}, \dots, X_{n+m}) \geq t\right)  \leq \mathsf{P}\left(f(X_{n+1}, X_{n+2}, \dots, X_{n+m+1}) \geq t\right).
		\end{split}
	\end{equation}
	
\end{lemma}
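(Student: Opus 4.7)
The plan is to reduce the claim to the standard fact that, for product measures with coordinate-wise stochastically ordered marginals, expectations of coordinate-wise increasing functions are monotone in the same direction. The three ingredients needed are all in hand: (i) the $X_n$ are independent (part 1 of Definition~\ref{def:exploding}), so the joint distributions are products; (ii) by part 3 of that definition the MLR relation $f_j \prec f_{j+1}$ holds, which the excerpt notes implies the one-dimensional stochastic dominance $f_j \leq_{st} f_{j+1}$; and (iii) since $f$ is increasing in each argument with the others fixed, the event $\{f \geq t\}$ is an up-set, so its indicator is coordinate-wise increasing.

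First I would rewrite both probabilities as integrals against product densities, using that $X_{n+k}$ has density $f_{n+k-1}$:
\begin{equation*}
\mathsf{P}\bigl(f(X_n,\ldots,X_{n+m}) \geq t\bigr) = \int \mathbf{1}_{\{f(x_0,\ldots,x_m) \geq t\}} \prod_{k=0}^m f_{n+k-1}(x_k)\,dx_k,
\end{equation*}
and similarly for the right-hand probability with $f_{n+k}$ in place of $f_{n+k-1}$. Then I would pass from one product measure to the other one coordinate at a time: fix an index $k$, apply Fubini to integrate out the other $m$ variables first, and observe that what remains is a one-dimensional integrand in $x_k$ that is increasing (a slice of an up-set is an up-set). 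The 1D stochastic dominance $f_{n+k-1} \leq_{st} f_{n+k}$ then gives that swapping the $k$-th marginal cannot decrease the integral. Iterating this swap for $k = 0, 1, \ldots, m$ interpolates from $\prod_k f_{n+k-1}$ to $\prod_k f_{n+k}$ and yields the lemma.

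An equivalent route would be a direct Skorokhod-style quantile coupling: for each $k$, take an independent uniform $U_k$ and set $\tilde{X}_{n+k} = F_{n+k-1}^{-1}(U_k)$ and $\tilde{X}'_{n+k+1} = F_{n+k}^{-1}(U_k)$, so that $\tilde{X}_{n+k} \leq \tilde{X}'_{n+k+1}$ almost surely while each tuple retains the correct joint law (independence across $k$ and correct marginals). Coordinate-wise monotonicity of $f$ then gives $f(\tilde{X}_n,\ldots,\tilde{X}_{n+m}) \leq f(\tilde{X}'_{n+1},\ldots,\tilde{X}'_{n+m+1})$ pointwise, and taking $\mathsf{P}(\cdot \geq t)$ finishes the argument. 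I do not anticipate a real obstacle; the only care required is matching the indexing of Definition~\ref{def:exploding} (where $X_j \sim f_{j-1}$) to the generic $\{X_n\}$ of the lemma, and invoking the coordinate-wise monotonicity of $f$ precisely to ensure that each one-dimensional slice of $\{f \geq t\}$ is an up-set.
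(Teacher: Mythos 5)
Your proposal is correct and matches the approach the paper intends: the paper's own proof is only a one-line citation to a prior work plus the remark that ``an additional randomization step'' is needed, and that step is exactly the quantile coupling $\tilde{X}_{n+k}=F_{n+k-1}^{-1}(U_k)$, $\tilde{X}'_{n+k+1}=F_{n+k}^{-1}(U_k)$ you describe. Both of your routes (the coordinate-at-a-time Fubini swap and the coupling) are sound, and your write-up in fact supplies the details the paper omits.
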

\begin{proof} The proof is similar to the one given in  \cite{unni-etal-ieeeit-2011}. Our proof requires an additional randomization step. 
\end{proof}

As in Theorem~\ref{thm:simplecondLB}, we show below that an almost sure condition is sufficient to satisfy the condition in \eqref{eq:Znnu_UB}. 

\medspace
\medspace
\medspace

\begin{theorem}
	To satisfy, for all $\delta > 0$,
	\begin{equation}
		\label{eq:Znnu_UB_2}
		\begin{split}
			\lim_{n \to \infty} \; \sup_{k \geq \nu \geq 1} \; \text{ess sup } \mathsf{P}_\nu & \left(\frac{1}{n}\sum_{i = k }^{k +n} Z_{i,k} \leq I - \delta \; \bigg| \; X_1, \dots, X_{k-1}\right) = 0
		\end{split}
	\end{equation}
	for some $0 < I < \infty$,  it is sufficient that the log-likelihoods are continuous and 
	$$
	\frac{1}{n} \sum_{k=1}^n Z_{k,1}  =  \frac{1}{n} \sum_{k=1}^n \log \frac{f_{k-1}(X_k)}{g(X_k)}\; \to \; I, \quad \text{a.s. under } \mathsf{P}_1. 
	$$
\end{theorem}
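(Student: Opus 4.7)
The plan is to mirror the strategy of Theorem~\ref{thm:simplecondLB}: reduce the essential supremum under $\mathsf{P}_\nu$ to a probability under $\mathsf{P}_1$, and then apply the almost-sure convergence hypothesis. The twist here, compared to Theorem~\ref{thm:simplecondLB}, is that the summation window now starts at $k \geq \nu$ rather than at $\nu$ itself. Under $\mathsf{P}_\nu$ the observations $X_k, \ldots, X_{k+n}$ have densities $f_{k-\nu}, \ldots, f_{k+n-\nu}$, while the log-likelihoods appearing in $Z_{i,k}$ use the canonical $f_0, \ldots, f_n$. Lemma~\ref{lem:stocdom} is the tool that will let us swap these ``ahead-of-schedule'' densities for the canonical ones at the cost of a one-sided stochastic inequality in the favorable direction.

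Concretely, first use the joint independence of the exploding process to drop the conditioning on $X_1, \ldots, X_{k-1}$, then change indices via $l = i - k + 1$ and set $m = k - \nu \geq 0$. Under $\mathsf{P}_\nu$ the post-change block $(X_\nu, X_{\nu+1}, \ldots)$ has exactly the law of the canonical exploding process $(X_1, X_2, \ldots)$ under $\mathsf{P}_1$, so
\begin{equation*}
\sum_{i=k}^{k+n} \log \frac{f_{i-k}(X_i)}{g(X_i)} \;\stackrel{d}{=}\; \sum_{l=1}^{n+1} \log \frac{f_{l-1}(X_{l+m})}{g(X_{l+m})} \quad \text{under } \mathsf{P}_1.
\end{equation*}
Next observe that, by transitivity of the MLR order along the chain $g \prec f_0 \prec f_1 \prec \cdots$, each ratio $f_{l-1}(x)/g(x)$ is increasing in $x$, so the displayed sum is a continuous, coordinate-wise increasing function of $(X_{1+m}, \ldots, X_{n+1+m})$. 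Applying Lemma~\ref{lem:stocdom} $m$ times (sliding the window one index at a time) then yields the reverse stochastic-dominance inequality
\begin{equation*}
\mathsf{P}_1\!\Bigl(\sum_{l=1}^{n+1} \log \frac{f_{l-1}(X_{l+m})}{g(X_{l+m})} \leq t\Bigr) \;\leq\; \mathsf{P}_1\!\Bigl(\sum_{l=1}^{n+1} \log \frac{f_{l-1}(X_l)}{g(X_l)} \leq t\Bigr),
\end{equation*}
uniformly in $m \geq 0$, hence in $k \geq \nu \geq 1$.

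To finish, set $t = n(I - \delta)$; the right-hand side becomes $\mathsf{P}_1\bigl(\frac{1}{n}\sum_{l=1}^{n+1} \log \frac{f_{l-1}(X_l)}{g(X_l)} \leq I - \delta\bigr)$, which tends to $0$ as $n \to \infty$ by the almost-sure hypothesis (the prefactor $\frac{n+1}{n} \to 1$ and the extra $(n{+}1)$st term contribute $o(1)$ a.s., and a.s.\ convergence implies convergence in probability). I expect the main obstacle to be the careful bookkeeping of the index shift $m = k - \nu$ together with the iterated application of Lemma~\ref{lem:stocdom}: one must verify that after relabeling the sum is indeed increasing in every variable on which it depends, and this is exactly where the MLR-transitivity observation $g \prec f_{l-1}$ for all $l$, together with the assumed continuity of the log-likelihoods, both play essential roles.
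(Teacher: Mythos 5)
Your proposal is correct and follows essentially the same route as the paper's proof: reduce the essential supremum to a probability under $\mathsf{P}_1$ via independence and the change-point-relative structure of the likelihoods, invoke Lemma~\ref{lem:stocdom} (iterated over the shift $m=k-\nu$) to show the worst case is $k=\nu$ (i.e.\ $m=0$), and conclude from the almost-sure convergence hypothesis. You simply spell out two steps the paper leaves implicit, namely the MLR-transitivity argument showing the summed log-likelihood is coordinate-wise increasing and the explicit index bookkeeping.
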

\begin{proof} Due to independence and the nature of exploding processes, we have
	\begin{equation*}
		\begin{split}
			\sup_{k \geq \nu \geq 1} \; &\text{ess sup } \mathsf{P}_\nu \left(\frac{1}{n}\sum_{i = k }^{k +n} Z_{i,k} \leq I - \delta \; \bigg| \; X_1, \dots, X_{k-1}\right) \\
			&=\sup_{k \geq \nu \geq 1} \mathsf{P}_\nu \left(\frac{1}{n}\sum_{i = k }^{k +n} \log \frac{f_{i-k}(X_i)}{g(X_i)} \leq I - \delta \right) \\
			&=\sup_{k \geq 1} \mathsf{P}_1 \left(\frac{1}{n}\sum_{i = k }^{k +n} \log \frac{f_{i-k}(X_i)}{g(X_i)} \leq I - \delta \right). 
		\end{split}
	\end{equation*}
	
	Because of Lemma~\ref{lem:stocdom}, the random variables
	$$
	\frac{1}{n}\sum_{i = k }^{k +n} \log \frac{f_{i-k}(X_i)}{g(X_i)}
	$$
	becomes stochastically bigger as $k$ increases. Thus, the maximum probability over $k$ is achieved at $k=1$. This gives
	\begin{equation*}
		\begin{split}
			\sup_{k \geq 1} \; &\mathsf{P}_1 \left(\frac{1}{n}\sum_{i = k }^{k +n} \log \frac{f_{i-k}(X_i)}{g(X_i)} \leq I - \delta \right) =\mathsf{P}_1 \left(\frac{1}{n}\sum_{i = 1 }^{1 +n} \log \frac{f_{i-1}(X_i)}{g(X_i)} \leq I - \delta \right).
		\end{split}
	\end{equation*}
	The last term goes to zero if 
	$$
	\frac{1}{n} \sum_{k=1}^n \log \frac{f_{k-1}(X_k)}{g(X_k)}\; \to \; I, \quad \text{a.s. under } \mathsf{P}_1. 
	$$
	This completes the proof. 
	
\end{proof}

Thus, for the optimality of the exploding CUSUM algorithms, it is enough to find conditions under which the almost sure convergence stated in the previous theorem is satisfied. 

\subsection{A Law of Large Numbers for Independent and Non-Identically Distributed Random Variables}
\label{sec:Cantelli}

In this section, we give conditions on exploding processes to guarantee 
$$
\frac{1}{n} \sum_{k=1}^n \log \frac{f_{k-1}(X_k)}{g(X_k)}\; \to \; I, \quad \text{a.s. under } \mathsf{P}_1. 
$$

We first recall Cantelli's strong law of large numbers. The proof can be found in \cite{shiryaev2019probability}.  
\begin{lemma}
	\label{lem:cantelli}
	Let $Y_1, Y_2, \dots$ be independent random variables with finite fourth moments, and let
	$$
	\mathsf{E}[|Y_n - \mathsf{E}[Y_n]|^4] \leq C, \quad n \geq 1,
	$$
	for some constant $C$. Then, as $n \to \infty$, 
	$$
	\frac{S_n - \mathsf{E}[S_n]}{n} \to 0, \quad \text{almost surely}. 
	$$
	Here $S_n = Y_1 + Y_2 + \dots Y_n$. 
\end{lemma}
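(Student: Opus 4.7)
The plan is to reduce to the mean-zero case and then apply the classical fourth-moment method of Cantelli: bound $\mathsf{E}[(S_n - \mathsf{E}[S_n])^4]$ by $O(n^2)$, convert this into a summable tail bound via Markov's inequality, and conclude almost sure convergence through the Borel--Cantelli lemma. So the first step is to define $\tilde Y_n = Y_n - \mathsf{E}[Y_n]$, which preserves independence, has mean zero, and still satisfies $\mathsf{E}[\tilde Y_n^4] \leq C$. Writing $\tilde S_n = \tilde Y_1 + \cdots + \tilde Y_n$, it suffices to show $\tilde S_n / n \to 0$ almost surely.

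The main computation is the expansion
\[
\mathsf{E}[\tilde S_n^4] = \sum_{i,j,k,l} \mathsf{E}[\tilde Y_i \tilde Y_j \tilde Y_k \tilde Y_l].
\]
By independence and the vanishing of first moments, every term with some index appearing exactly once contributes zero. The surviving terms are those of the form $\mathsf{E}[\tilde Y_i^4]$ (giving $n$ contributions, each bounded by $C$) and those of the form $\mathsf{E}[\tilde Y_i^2]\mathsf{E}[\tilde Y_j^2]$ with $i \neq j$, which appear with combinatorial weight $3$ from the $\binom{4}{2}$ ways of splitting the four positions into two pairs. Using $\mathsf{E}[\tilde Y_i^2] \leq (\mathsf{E}[\tilde Y_i^4])^{1/2} \leq \sqrt{C}$ by Jensen's inequality, one obtains $\mathsf{E}[\tilde S_n^4] \leq nC + 3n(n-1)C \leq 3Cn^2$.

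With this quartic moment bound in hand, Markov's inequality gives, for any $\epsilon > 0$,
\[
\mathsf{P}\!\left(\left|\tilde S_n / n\right| > \epsilon\right) \;\leq\; \frac{\mathsf{E}[\tilde S_n^4]}{n^4 \epsilon^4} \;\leq\; \frac{3C}{n^2 \epsilon^4}.
\]
Since $\sum_n n^{-2} < \infty$, the first Borel--Cantelli lemma implies that for each fixed $\epsilon > 0$, the event $\{|\tilde S_n/n| > \epsilon\}$ occurs only finitely often with probability one. Taking a countable sequence $\epsilon_k \downarrow 0$ and intersecting the corresponding full-measure events yields $\tilde S_n / n \to 0$ almost surely, which is the claim.

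There is no real obstacle here: the fourth moment bound $\mathsf{E}|\tilde Y_n|^4 \leq C$ is precisely strong enough to make the variance-type term quadratic in $n$ rather than higher order, and the rest is textbook Borel--Cantelli. The only place where some care is needed is in justifying the combinatorial count of surviving terms in the expansion of $\tilde S_n^4$; I would present that accounting explicitly, since using a weaker bound there (for example, confusing the $3$ with a $6$) still gives the same asymptotic rate but obscures the structure.
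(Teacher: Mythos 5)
Your proof is correct and is essentially the classical fourth-moment argument that the paper itself does not reproduce but simply cites from Shiryaev's textbook: center the variables, bound $\mathsf{E}[\tilde S_n^4]$ by $nC+3n(n-1)C\le 3Cn^2$ using independence, zero means, and $\mathsf{E}[\tilde Y_i^2]\le\sqrt{C}$, then apply Markov and Borel--Cantelli with a countable sequence $\epsilon_k\downarrow 0$. One cosmetic remark: the weight $3$ per unordered pair of indices comes from the $3$ ways to partition the four positions into two pairs (equivalently $6\binom{n}{2}=3n(n-1)$ total terms), not from $\binom{4}{2}=6$ as you phrase it, but your final count $3n(n-1)$ is the right one.
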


Cantelli's strong law of large numbers provides us the needed tool to state our conditions. 

\medspace
\medspace
\medspace
\begin{theorem}
	\label{thm:KLDiverConv}
	We assume the following conditions hold for every $k \geq 1$:
	\begin{equation}
		\begin{split}
			\mathsf{E}&\left[\left(\log \frac{f_{k-1}(X_k)}{g(X_k)}\right)^4 \right] < \infty, \quad X_k \sim f_{k-1}, \\
			\mathsf{E}&\left[\left(\log \frac{f_{k-1}(X_k)}{g(X_k)} -  D(f_{k-1} \; \| \; g)\right)^4 \right] \leq C, \quad X_k \sim f_{k-1},
		\end{split}
	\end{equation}
	where $C$ is a constant. 
	Further, there exists an $I > 0$ such that
	\begin{equation}
		\begin{split}
			\frac{1}{n} &\sum_{k=1}^n D(f_{k-1} \; \| \; g)  \quad \to \quad I, \quad n \to \infty. 
		\end{split}
	\end{equation}
	Then, by Cantelli's strong law of large numbers (Lemma~\ref{lem:cantelli}),
	$$
	\frac{1}{n} \sum_{k=1}^n \log \frac{f_{k-1}(X_k)}{g(X_k)}\; \to \; I, \quad \text{a.s. under } \mathsf{P}_1. 
	$$
	
\end{theorem}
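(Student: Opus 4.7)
The plan is to apply Cantelli's strong law of large numbers (Lemma~\ref{lem:cantelli}) directly to the sequence of log-likelihood ratios. Define
$$
Y_k = \log \frac{f_{k-1}(X_k)}{g(X_k)}, \qquad k \geq 1,
$$
and let $S_n = Y_1 + \cdots + Y_n$. Under $\mathsf{P}_1$, the change point is at $\nu=1$, so $X_k \sim f_{k-1}$ for every $k$, and by part~(1) of Definition~\ref{def:exploding} the $X_k$'s are jointly independent. Hence the $Y_k$'s are independent random variables under $\mathsf{P}_1$.

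Next I would identify the means. By the definition of Kullback--Leibler divergence,
$$
\mathsf{E}_1[Y_k] \;=\; \int \log \frac{f_{k-1}(x)}{g(x)} f_{k-1}(x)\, dx \;=\; D(f_{k-1} \,\|\, g),
$$
which is finite by the first moment hypothesis. The second hypothesis gives exactly the uniform fourth-moment bound
$$
\mathsf{E}_1[|Y_k - \mathsf{E}_1[Y_k]|^4] \;\leq\; C, \qquad k \geq 1,
$$
so the assumptions of Cantelli's lemma are met. Applying it yields
$$
\frac{S_n - \mathsf{E}_1[S_n]}{n} \;\to\; 0 \qquad \text{almost surely under } \mathsf{P}_1.
$$

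Finally I would combine this with the assumed Ces\`aro-type convergence. Since
$$
\frac{\mathsf{E}_1[S_n]}{n} \;=\; \frac{1}{n}\sum_{k=1}^n D(f_{k-1} \,\|\, g) \;\to\; I,
$$
adding this deterministic limit to the almost sure convergence above gives
$$
\frac{1}{n}\sum_{k=1}^n \log \frac{f_{k-1}(X_k)}{g(X_k)} \;=\; \frac{S_n}{n} \;\to\; I \qquad \text{a.s. under } \mathsf{P}_1,
$$
which is the claim. There is no real obstacle here: the argument is essentially bookkeeping, with the only subtlety being to note that independence under $\mathsf{P}_1$ is inherited from the independence property built into Definition~\ref{def:exploding}, and that the mean of the log-likelihood ratio is by definition the KL divergence so that the third hypothesis matches $\mathsf{E}_1[S_n]/n$.
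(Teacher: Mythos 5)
Your proof is correct and is exactly the argument the paper intends: the theorem is proved by applying Cantelli's strong law (Lemma~\ref{lem:cantelli}) to the independent log-likelihood ratios, whose means are the KL divergences, and then adding the assumed Ces\`aro limit of those means. The paper leaves this as an immediate consequence of the lemma, and your write-up fills in the same bookkeeping.
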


\medspace
\medspace
\medspace
\subsection{Asymptotic Optimality of EX-CUSUM Algorithm}
The previous theorem provides further simplification on the conditions needed for the optimality of the EX-CUSUM algorithm. We state this as a theorem:

\medspace
\medspace
\begin{theorem} 
	Under moment conditions stated in Theorem~\ref{thm:KLDiverConv}, if there exists an $I > 0$, such that
	$$
	\frac{1}{n} \sum_{k=1}^n D(f_{k-1} \; \| \; g)  \quad \to \quad I, \quad n \to \infty, 
	$$
	then the EX-CUSUM algorithm is asymptotically optimal for both Lorden's and Pollak's minimax formulations. 
\end{theorem}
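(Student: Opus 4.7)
The plan is to assemble the asymptotic optimality statement from the chain of implications already established in the preceding subsections; each hypothesis of the present theorem is designed to trigger exactly one of these intermediate results, so the proof amounts to careful bookkeeping rather than new analysis.

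First, the two moment bounds and the Ces\`aro convergence of the Kullback--Leibler divergences $D(f_{k-1}\,\|\,g)$ to $I$ are precisely the hypotheses of Theorem~\ref{thm:KLDiverConv}. I would invoke it to obtain the almost-sure law of large numbers
$$\frac{1}{n}\sum_{k=1}^{n} \log \frac{f_{k-1}(X_k)}{g(X_k)}\;\to\; I, \quad \mathsf{P}_1\text{-a.s.}$$
Second, I would feed this single a.s. limit into Theorem~\ref{thm:simplecondLB} and into its upper-bound analogue in the previous subsection. These two simplification theorems are the bridge that converts the a.s. limit into the uniform-in-$\nu$ probability statements \eqref{eq:Znnu_LB} and \eqref{eq:Znnu_UB}, under the natural identification $Z_{i,\nu}=\log(f_{i-\nu}(X_i)/g(X_i))$ and assuming the continuity of log-likelihoods required by the upper-bound simplification.

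Third, with both probability conditions verified, Theorem~\ref{thm:modifiedconds} part~(1) immediately delivers the universal asymptotic lower bound $\frac{\log\gamma}{I}(1+o(1))$ on the worst-case detection delay under both Lorden's and Pollak's criteria, while part~(3), applied to the modified generalized CUSUM stopping rule $\tau_{mc}$ with threshold $\log\gamma$, yields the matching upper bound on its Lorden delay (and hence on its Pollak delay, which is never larger). The key observation at this step is that, because the log-likelihood $Z_{i,k}$ of an exploding process depends on $(i,k)$ only through the difference $i-k$, the statistic $\max_{1\leq k\leq n}\sum_{i=k}^{n} Z_{i,k}$ coincides with $W_n$; consequently $\tau_{mc}=\tau_{ec}$ whenever both use threshold $\log\gamma$.

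Finally, Theorem~\ref{thm:FAR} with $A=\log\gamma$ certifies the false-alarm constraint $\mathsf{E}_\infty[\tau_{ec}]\geq\gamma$, so $\tau_{ec}$ is feasible. Sandwiching its Lorden delay between the universal lower bound and the part-(3) upper bound yields asymptotic optimality for Lorden's problem, and the same bounds squeeze the Pollak delay since it is dominated by the Lorden delay from above and by the same universal lower bound from below. The main obstacle, if any, is conceptual rather than technical: one must explicitly recognize the identification $\tau_{mc}=\tau_{ec}$ in order for the abstract conclusions of Theorem~\ref{thm:modifiedconds} to transfer to the EX-CUSUM rule; all other steps are direct applications of results already proved.
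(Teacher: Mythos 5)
Your proposal is correct and matches the paper's intent exactly: the paper states this theorem without a separate proof precisely because it is the assembly of Theorem~\ref{thm:KLDiverConv}, the two simplification theorems, Theorem~\ref{thm:modifiedconds}, and Theorem~\ref{thm:FAR} that you describe, including the key observation that $Z_{i,k}$ depends only on $i-k$ so that $\tau_{mc}=\tau_{ec}$. Your explicit flagging of the continuity-of-log-likelihoods hypothesis (needed for the upper-bound simplification via Lemma~\ref{lem:stocdom} but not listed in the theorem statement) is a fair and correct observation about an implicit assumption in the paper.
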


\subsection{Gaussian Example}
\label{sec:Gaussexample}
We now give an example of an exploding process model for which all the conditions stated in this paper are satisfied. 
We assume that 
$$
g =  \mathcal{N}(0,1). 
$$
and 
$$
f_n = \mathcal{N}(\mu_n, 1)
$$
with
$$
0 \leq \mu_n \uparrow \mu. 
$$

All the likelihood ratios in this example are monotone because $0 \leq \mu_n \to \mu$. Also,  log-likelihood ratios are continuous because they are linear. The fourth-moment condition on log-likelihood ratios is satisfied because of Gaussianity.

Further, the log-likelihood ratio between $f_n$ and $g$ is given by
$$
\log \frac{f_n(X_{n+1})}{g(X_{n+1})} = \mu_n (X_{n+1} - \frac{\mu_n}{2}). 
$$
Thus,
\begin{equation}
	\begin{split}
		D(f_n \; \| \; g) &= \Expect_{X_{n+1} \sim f_n} \left[\log \frac{f_n(X_{n+1})}{g(X_{n+1})} \right] = \mu_n \left[ \Expect_{X_{n+1} \sim f_n}[X_{n+1}] - \frac{\mu_n}{2}\right]\\
		&=\mu_n \left(\mu_n - \frac{\mu_n}{2}\right) = \frac{\mu_n^2}{2}. 
	\end{split}
\end{equation}
This implies
$$
D(f_n \; \| \; g) = \frac{\mu_n^2}{2} \to \frac{\mu^2}{2}. 
$$
This further implies 
$$
\frac{1}{n} \sum_{k=1}^n D(f_{k-1} \; \| \; g)  \quad \to \quad \frac{\mu^2}{2}
$$
by Cesaro sum limit. 

Finally, we need to check if the fourth central moments of the log-likelihoods are uniformly bounded. Let
$$
\xi_n := \log \frac{f_n(X_{n+1})}{g(X_{n+1})} = \mu_n (X_{n+1} - \frac{\mu_n}{2}). 
$$
Also, let
$$
\Xi_n := \Expect_{X_{n+1} \sim f_n}[\xi_n] = \frac{\mu_n^2}{2}. 
$$
Then,
\begin{equation}
	\begin{split}
		\Expect_{X_{n+1} \sim f_n}\left[(\xi_n-\Xi_n)^4 \right]
		   &= \Expect_{X_{n+1} \sim f_n} \left[\left(\mu_n (X_{n+1} - \frac{\mu_n}{2}) - \frac{\mu_n^2}{2}\right)^4\right] \\
		& \quad  = \Expect_{X_{n+1} \sim f_n} \left[\left(\mu_n X_{n+1} - \mu_n^2\right)^4\right] \\
		& \quad  = \Expect_{X_{n+1} \sim f_n} \left[\left(\mu_n( X_{n+1} - \mu_n)\right)^4\right] \\
		&\quad= \mu_n^4 \Expect_{X_{n+1} \sim f_n} \left[\left(( X_{n+1} - \mu_n)\right)^4\right]\\
		&\quad \leq 3\mu^4. 
	\end{split}
\end{equation}
The last inequality is true because $\mu_n \uparrow \mu$ and because under $f_n$, $X_{n+1} - \mu_n \sim \mathcal{N}(0,1)$, and the latter's fourth moment is exactly equal to $3$. Thus, the fourth central moments are uniformly bounded by $C=3 \mu^4$. All the above arguments show that the EX-CUSUM algorithm is asymptotically optimal for this example.

\section{Numerical Results}
In this section, we apply the EX-CUSUM algorithm to the Gaussian data example discussed in Section~\ref{sec:Gaussexample}. To generate data in Fig.~\ref{fig:numGaussdata}, we used 
$$
\mu_n = \arctan(n) \to \mu = \frac{\pi}{2}. 
$$
As seen in the figure, the EX-CUSUM statistic stays close to zero before the change point of $80$ and grows toward $\infty$ after the change. This growth can be detected using a well-designed threshold. 
\begin{figure}[h]
	\centering
	\includegraphics[scale=0.4]{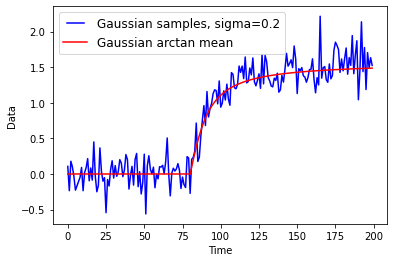}
	\includegraphics[scale=0.4]{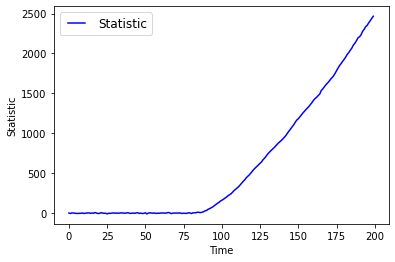}\\
	\caption{EX-CUSUM algorithm applied to Gaussian data. }
	\label{fig:numGaussdata}
\end{figure}

\section{Conclusions}
We introduced a new class of stochastic processes to model exploding nature of post-change observations in some change-point problems. Such observations are common in satellite and military applications where an approaching enemy object can cause the observation to grow stochastically over time. We proved that under mild conditions on the growth of Kullback-Leibler divergence, our proposed  EX-CUSUM algorithm is asymptotically optimal, as the mean time to false alarm grows to infinity. In our future work, we will investigate the exact optimality of the algorithm and also obtain computationally efficient methods for detecting changes. 

\section*{Acknowledgment}
A part of this work was presented at the 58th Annual Allerton Conference on Communication, Control, and Computing, Monticello, IL, September 2022.  The work of Taposh Banerjee was partially supported by the National Science Foundation under Grant 1917164 through a subcontract from Vanderbilt University. Tim Brucks passed away while we were working on this paper. This paper is dedicated to his memory.

\bibliographystyle{tfs}
\bibliography{TaposhQCD}

\end{document}